\newtheorem{theorem}{Theorem}
\newtheorem{lemma}[theorem]{Lemma}
\newtheorem{proposition}[theorem]{Proposition}
\def\R{{\mathbb R}}
\begin{document}

\author{Colin McSwiggen}
\date{}

\title{A new proof of Harish-Chandra's integral formula}

\maketitle

\begin{abstract}
We present a new proof of Harish-Chandra's formula $$\Pi(h_1) \Pi(h_2) \int_G e^{\langle \mathrm{Ad}_g h_1, h_2 \rangle} dg = \frac{ [ \! [ \Pi, \Pi ] \!] }{|W|} \sum_{w \in W} \epsilon(w) e^{\langle w(h_1),h_2 \rangle},$$ where $G$ is a compact, connected, semisimple Lie group, $dg$ is normalized Haar measure, $h_1$ and $h_2$ lie in a Cartan subalgebra of the complexified Lie algebra, $\Pi$ is the discriminant, $\langle \cdot, \cdot \rangle$ is the Killing form, $[ \! [ \cdot, \cdot ] \!]$ is an inner product that extends the Killing form to polynomials, $W$ is a Weyl group, and $\epsilon(w)$ is the sign of $w \in W$.

The proof in this paper follows from a relationship between heat flow on a semisimple Lie algebra and heat flow on a Cartan subalgebra, extending methods developed by Itzykson and Zuber for the case of an integral over the unitary group $U(N)$.  The heat-flow proof allows a systematic approach to studying the asymptotics of orbital integrals over a wide class of groups.
\end{abstract}

\section{Introduction}

Harish-Chandra's integral formula states \begin{equation} \label{eqn:hc} \Pi(h_1) \Pi(h_2) \int_G e^{\langle \mathrm{Ad}_g h_1, h_2 \rangle} dg = \frac{ [ \! [ \Pi, \Pi ] \!] }{|W|} \sum_{w \in W} \epsilon(w) e^{\langle w(h_1),h_2 \rangle},\end{equation} where $G$ is a compact, connected, semisimple Lie group, $dg$ is normalized Haar measure, $h_1$ and $h_2$ lie in a Cartan subalgebra of the complexified Lie algebra, $\langle \cdot, \cdot \rangle$ is the Killing form, $W$ is a Weyl group, and $\epsilon(w)$ is the sign of $w \in W$.  The polynomial $\Pi$ and bracket notation $ [ \! [ \Pi, \Pi ] \!]$ are defined below in (\ref{eqn:pi-def}) and (\ref{eqn:bracket-def}) respectively.  We explain all notation in detail in Section \ref{sec:preliminaries}.

This identity was first proved by Harish-Chandra in his 1957 paper on invariant differential operators \cite{HC}.  The importance of integrals such as (\ref{eqn:hc}) for mathematical physics was first noted in 1980 by Itzykson and Zuber \cite{IZ}, who independently discovered the formula for the case of an integral over the unitary group.  The unitary integral has become known as the Harish-Chandra-Itzykson-Zuber (HCIZ) integral and has applications in random matrix theory, representation theory, combinatorics, and statistics.  It has also been shown that this integral can be understood in terms of enumerative geometry \cite{GGN}.  The HCIZ integral is typically written as
\begin{equation} \label{eqn:hciz} \int_{U(N)} e^{\mathrm{tr} (AUBU^\dagger)} dU = \left( \prod_{p=1}^{N-1}p! \right) \frac{\det(e^{a_i b_j})_{i,j = 1}^N}{\Delta(A) \Delta(B)} \end{equation}
where $U(N)$ is the group of $N$-by-$N$ unitary matrices, $A$ and $B$ are fixed $N$-by-$N$ diagonal matrices with eigenvalues $a_1 < \hdots < a_N$ and $b_1 < \hdots < b_N$ respectively, and $$\Delta(A) = \prod_{i < j} (a_j - a_i)$$ is the Vandermonde determinant.  Although $U(N)$ is not semisimple, (\ref{eqn:hciz}) follows from (\ref{eqn:hc}) by decomposing $U(N) = SU(N) \rtimes U(1)$ and integrating separately on the factors.  A general procedure for using (\ref{eqn:hc}) to evaluate integrals over compact groups that may be neither semisimple nor connected is explained in \cite{CM}.

Harish-Chandra's proof of (\ref{eqn:hc}) relies on algebraic constructions that he develops for the purpose of studying Fourier transforms on a semisimple Lie algebra, and the theorem appears as an auxiliary result within a large project in harmonic analysis and representation theory.  These may be reasons that analogous integrals over groups other than $U(N)$ have received relatively little attention from mathematical physicists,\footnotemark \footnotetext{The papers \cite{GZ, AG, BBMP}, among others, also treat integrals of the form (\ref{eqn:hciz}) over the orthogonal group $O(N)$ and compact symplectic group $USp(2N)$.  However, those integrals differ from (\ref{eqn:hc}), as they take the matrices $A$ and $B$ to be symmetric in the orthogonal case or self-dual quaternionic in the symplectic case. In contrast, to get an integral of the form (\ref{eqn:hc}) we would have to take $A$ and $B$ to be {\it anti-}symmetric or anti-self dual.} although Forrester, Ipsen, Liu and Zhang have recently used the Harish-Chandra integrals over $O(N)$ and $USp(2N)$ to derive eigenvalue probability densities for certain matrix product ensembles \cite{FILZ}.

In this paper, we prove Harish-Chandra's formula by relating the heat flow on a semisimple Lie algebra to the heat flow on a Cartan subalgebra.  The proof generalizes the methods employed by Itzykson and Zuber in their first proof of the HCIZ formula in \cite{IZ}.  The main purpose of our proof is to provide insight into the asymptotics of (\ref{eqn:hc}) as the rank of $G$ increases. Leading-order asymptotics for (\ref{eqn:hciz}) as $N \to \infty$ were formally computed by Matytsin \cite{AM} and rigorously justified by Guionnet and Zeitouni \cite{GZ, AG}, while Bun et al.~have reformulated these results in terms of particle trajectories concentrating around an instanton \cite{BBMP}.

The central result in \cite{AM, GZ, AG} on the large-$N$ asymptotics of the HCIZ integral is an expression for the leading-order contribution in terms of a particular solution to the complex Burgers' equation.  It has been shown that this phenomenon can be understood in terms of a relationship between the HCIZ integral and the Calogero--Moser system \cite{GM}.  In fact, Harish-Chandra integrals arise naturally in the study of the quantum Calogero--Moser system, where the homomorphism of invariant differential operators that Harish-Chandra uses to derive (\ref{eqn:hc}) has been shown to provide a quantum Hamiltonian reduction of the algebra of differential operators on a reductive Lie algebra \cite{LS, PE}.  We elaborate on the relationship between Harish-Chandra integrals and Calogero--Moser systems in Section \ref{subsec:cm_relationship} below.

While it suffices for the purposes of this paper to relate heat flow on a Lie algebra to heat flow on a Cartan subalgebra, one could obtain richer information by studying the relationship between Brownian motion on the full algebra and Brownian motion confined to a Weyl chamber.  This latter process is an analogue of Dyson Brownian motion and has been studied by Grabiner in \cite{DG}.  The arguments regarding asymptotics for the $U(N)$ integral in \cite{GZ, AG} rely on a large-deviations principle for Dyson Brownian motion, suggesting that a similar investigation of processes on general Weyl chambers could prove useful in the study of integrals over other groups.

There are other known proofs of (\ref{eqn:hciz}), besides the heat-equation approach. These include a symplectic geometry proof based on a stationary-phase approximation that is shown to be exact using the Duistermaat--Heckman theorem \cite{DH} and a representation-theoretic proof that proceeds by writing the integral in terms of irreducible characters \cite{IZ}.  In fact in \cite{DH} Duistermaat and Heckman already observed that their localization result implies the general formula (1).  We give a detailed presentation of this derivation in Section \ref{subsec:symplectic_proof}, and in Section \ref{subsec:repthy_proof} we show how the integral formula follows from its interpretation in character theory.

Section \ref{sec:preliminaries} below introduces the necessary notation and background to formally state the integral formula (\ref{eqn:hc}) and provides a brief overview of the proof strategy.  Section \ref{sec:hc_heat_eqn_proof} consists of the heat-flow proof of the integral formula. Section \ref{sec:concl_remarks} discusses the relationship between Harish-Chandra integrals and Calogero--Moser systems and presents the two further proofs of the integral formula from the perspectives of symplectic geometry and representation theory.  The expository paper \cite{CM} contains a more detailed discussion of the formula and its contemporary significance, along with a modern presentation of Harish-Chandra's original proof and explicit formulae for the integrals over the other compact classical groups.

\section{Preliminaries and statement of the theorem} \label{sec:preliminaries}

Let $G$ be a compact, connected, semisimple real Lie group of rank $N$ with normalized Haar measure $dg$, $\mathfrak{g}_0$ its Lie algebra, and $\mathfrak{g} = \mathfrak{g}_0 \otimes \mathbb{C}$ the complexification of $\mathfrak{g}_0$.  Let $\langle \cdot, \cdot \rangle$ be the Killing form on $\mathfrak{g}_0$, which extends by linearity to $\mathfrak{g}$.

Let $\mathfrak{h}_0 \subset \mathfrak{g}_0$ be a Cartan subalgebra, $\mathfrak{h} = \mathfrak{h}_0 \otimes \mathbb{C} \subset \mathfrak{g}$ its complexification, and $W$ the Weyl group acting on $\mathfrak{h}$. For $w \in W$, let $\epsilon(w)$ be the {\it sign} of $w$, equal to 1 if $w$ is generated by an even number of reflections and $-1$ if $w$ is generated by an odd number of reflections.  If we consider $w$ as a linear operator acting on $\mathfrak{h}$, then $\epsilon(w) = \det w$.

Fix a choice $\alpha_1, \hdots, \alpha_r$ of the positive roots of $\mathfrak{h}$, and define the {\it discriminant} $\Pi: \mathfrak{h} \to \mathbb{C}$ to be the polynomial function \begin{equation} \label{eqn:pi-def} \Pi(h) = \prod_{i=1}^r \alpha_i(h).\end{equation}  Geometrically, $\Pi(h)$ is the square root of the volume of the adjoint orbit of $h$, computed using the metric induced by the Killing form metric on $\mathfrak{g}_0$.  It can be shown (see e.g. \cite[ch.~3, Corollary 3.8]{SH}) that $\Pi$ is {\it skew} with respect to the action of the Weyl group, in the sense that $\Pi(w(h)) = \epsilon(w) \Pi(h)$.

Let $(z_j)_{j=1}^{\dim \mathfrak{g}}$ be complex coordinates on $\mathfrak{g}$ such that $z_j = x_j + i y_j$ where $(x_j), (y_j)$ are coordinates on $\mathfrak{g}_0$.  We can write a polynomial function $p$ on $\mathfrak{g}$ as $p(z) = \sum c_\beta z^\beta$, where the sum runs over multi-indices $\beta$. We can then associate to $p$ the differential operator \begin{equation} \label{eqn:assoc-diffop} p(\partial) = \sum_\beta c_\beta \frac{\partial^{|\beta|}}{ \partial z^\beta} = \sum_\beta c_\beta \prod_{j=1}^{\dim \mathfrak{g}} \left( \frac{\partial}{\partial x_j} - i \frac{\partial}{\partial y_j} \right)^{\beta_j}.\end{equation}  The coefficients $c_\beta$ depend on the choice of coordinates $(z_i)$, but the operator $p(\partial)$ does not.  We may always identify a polynomial function on $\mathfrak{g}$ with its restriction to $\mathfrak{g}_0$, and we can consider $p(\partial)$ as a differential operator on $\mathfrak{g}_0$ by defining, for $f \in C^\infty(\mathfrak{g}_0)$, \begin{equation} \label{eqn:diffop-restr} p(\partial) f = \sum_\beta c_\beta \frac{\partial^{|\beta|} f}{\partial x^\beta}, \end{equation} that is by differentiating only with respect to the real coordinates on $\mathfrak{g}_0$.

Finally, if $p, q$ are polynomial functions on $\mathfrak{g}$, define\footnotemark \footnotetext{See \cite[ch.~3 \textsection 1]{SH} for a proof that $[ \! [ \cdot, \cdot ] \! ]$ in fact defines an inner product on the symmetric algebra of formal (commutative) polynomials over $\mathfrak{g}$, which restricts to the Killing form on $\mathfrak{g}$.}  \begin{equation} \label{eqn:bracket-def} [ \! [ p, q ] \! ] = p(\partial) q(z) \big |_{z = 0}. \end{equation}

We are now ready to state Harish-Chandra's integral formula.
\begin{theorem} \label{thm:hc} For all $h_1, h_2 \in \mathfrak{h}$, $$ \Pi(h_1) \Pi(h_2) \int_G e^{\langle \mathrm{Ad}_g h_1, h_2 \rangle} dg = \frac{ [ \! [ \Pi, \Pi ] \!] }{|W|} \sum_{w \in W} \epsilon(w) e^{\langle w(h_1),h_2 \rangle}. $$
\end{theorem}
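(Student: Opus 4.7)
The plan is to imitate the Itzykson--Zuber heat-equation argument \cite{IZ} on the full semisimple Lie algebra. Write $n = \dim_{\mathbb{R}}\mathfrak{g}_0$ and $|x|^2 = -\langle x,x\rangle$, and define the $\mathrm{Ad}$-averaged Gaussian
$$ u(x,t) \;=\; (2\pi t)^{-n/2} \int_G e^{-|x-\mathrm{Ad}_g h_1|^2/(2t)}\,dg, \qquad x \in \mathfrak{g}_0,\ t>0. $$
By construction $u$ is $\mathrm{Ad}$-invariant and satisfies $\partial_t u = \tfrac12 \Delta_{\mathfrak{g}_0} u$; expanding the quadratic form, for $x \in \mathfrak{h}_0$,
$$ u(x,t) \;=\; (2\pi t)^{-n/2}\,e^{-(|x|^2+|h_1|^2)/(2t)} \int_G e^{\langle \mathrm{Ad}_g h_1,\,x\rangle/t}\,dg, $$
which exhibits the Harish-Chandra integrand with parameter $h_2 = x/t$. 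The idea is to compute $u|_{\mathfrak{h}_0}$ by a second route and equate.

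That second route is the radial reduction. Since the discriminant $\Pi$ is $\Delta_{\mathfrak{h}_0}$-harmonic, Harish-Chandra's radial part identity
$$ \Pi(h)\,(\Delta_{\mathfrak{g}_0} f)(h) \;=\; \Delta_{\mathfrak{h}_0}\bigl(\Pi(h)\,f(h)\bigr) \qquad (h \in \mathfrak{h}_0,\ f\ \mathrm{Ad}\text{-invariant})$$
applied to $u$ shows that $v(h,t) := \Pi(h)\,u(h,t)$ solves the free heat equation $\partial_t v = \tfrac12 \Delta_{\mathfrak{h}_0} v$ on $\mathfrak{h}_0$, and $v$ is $W$-skew since $\Pi$ is skew and $u|_{\mathfrak{h}_0}$ is $W$-invariant. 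As $t \to 0^+$ the mass of $u$ concentrates on $\mathcal{O}_{h_1}$, which meets $\mathfrak{h}_0$ transversally in the Weyl orbit $\{w(h_1):w\in W\}$; a local transversality/coarea computation together with the $W$-skewness of $v$ identifies the initial data of $v$ with a scalar multiple of $\sum_{w\in W}\epsilon(w)\delta_{w(h_1)}$. Uniqueness of bounded heat solutions then forces
$$ v(h,t) \;=\; C(h_1)\,(2\pi t)^{-N/2} \sum_{w \in W}\epsilon(w)\,e^{-|h-w(h_1)|^2/(2t)} $$
for some scalar $C(h_1)$ independent of $h$ and $t$.

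Equating the two expressions for $v$, using $n - N = 2r$ where $r$ is the number of positive roots, and substituting $h = t h_2$ to cancel all $t$-dependence, yields
$$ \Pi(h_2) \int_G e^{\langle \mathrm{Ad}_g h_1, h_2\rangle}\,dg \;=\; C(h_1)\,(2\pi)^r \sum_{w \in W}\epsilon(w)\,e^{\langle w(h_1),h_2\rangle}. $$
Running the same argument with $h_1, h_2$ interchanged (both the integral and the Weyl sum are easily seen to be symmetric in their arguments) gives the identity with $\Pi(h_1)$ and $C(h_2)$ on the right, forcing $\Pi(h_1)C(h_1)$ to equal a constant $K$ independent of both arguments. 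To evaluate $K$ I would Taylor expand at $h_1 = 0$: every $W$-skew polynomial of degree less than $r$ vanishes, so the leading contribution of the Weyl sum is its degree-$r$ term, which the elementary identity $\Pi(\partial_h)\langle v, h\rangle^r = r!\,\Pi(v)$ evaluates to $\tfrac{|W|}{[\![\Pi,\Pi]\!]}\Pi(h_1)\Pi(h_2)$. Matching with the leading term $\Pi(h_1)\Pi(h_2)$ of the left-hand side pins $K = [\![\Pi,\Pi]\!]/((2\pi)^r |W|)$ and gives Theorem~\ref{thm:hc} for $h_1, h_2 \in \mathfrak{h}_0$; the full statement for $h_1, h_2 \in \mathfrak{h}$ then follows by analytic continuation, both sides being entire.

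The main obstacle is the identification of the initial data of $v$. A priori $v(h,0) = \Pi(h)\,u(h,0)|_{\mathfrak{h}_0}$ is only the product of the polynomial $\Pi$ with the highly singular distribution obtained by pushing Haar measure forward along the orbit map and restricting to $\mathfrak{h}_0$, and one must verify that multiplication by $\Pi$ regularises it to a finite sum of point masses with the correct relative weights $\epsilon(w)$. Once this step is in hand the rest is mechanical: the radial part identity is classical, the symmetry argument pinning $\Pi(h_1)C(h_1) = K$ is elementary, and the Taylor-coefficient computation is a brief manipulation with $[\![\cdot,\cdot]\!]$.
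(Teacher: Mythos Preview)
Your proposal is correct and follows the same overall architecture as the paper's main proof in Section~\ref{sec:hc_heat_eqn_proof}: form the $G$-averaged heat kernel, apply the radial-part identity (the paper's Proposition~\ref{prop:radial-laplacian} and Lemma~\ref{lem:soln-reln}) to pass to the heat equation on $\mathfrak{h}_0$, identify the $t\to 0$ initial data as a $W$-alternating sum of point masses, and then normalise.

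The one substantive difference is in how you handle the initial data and the resulting constant. The paper carries out an explicit steepest-descent analysis on $G/S$ (Lemmas~\ref{lem:crit-pts} and~\ref{lem:hessian-det}), computing the Hessian determinant $\sqrt{\det(-H_{xx})}=\epsilon(w)\Pi(h_1)\Pi(h_2)$ at each critical point; this directly produces the factor $\Pi(h_1)\Pi(h_2)$ and leaves a single geometric constant $C$ to be fixed by applying $\Pi(\partial_{h_1})$ at $h_1=0$. You instead argue only that the limit is \emph{some} $C(h_1)\sum_w\epsilon(w)\delta_{w(h_1)}$, and then invoke the $h_1\leftrightarrow h_2$ symmetry of the integral and of the Weyl sum to force $C(h_1)\Pi(h_1)=K$, after which your Taylor/\,$\Pi(\partial)$ computation is equivalent to the paper's. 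Your symmetry trick is a genuine economy: it spares you the explicit Hessian evaluation. But note that what you call the ``main obstacle''---showing the limit is a finite sum of Dirac masses with no higher-order singularities---still requires the local analysis the paper does in Lemmas~\ref{lem:crit-pts}--\ref{lem:hessian-det} (critical points indexed by $W$, nondegenerate Hessian); your coarea sketch would reduce to essentially the same computation, just without needing its precise value.
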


In his original proof of Theorem \ref{thm:hc}, Harish-Chandra starts by showing a general form of Proposition \ref{prop:radial-laplacian} below on the radial part of an invariant differential operator, which implies that the function $$h_1 \mapsto \Pi(h_1) \int_G e^{\langle \mathrm{Ad}_g h_1, h_2 \rangle} dg$$ is a joint eigenfunction of all differential operators on $\mathfrak{h}_0$ of the form $q(\partial)$, where $q$ is a $W$-invariant polynomial.  He then proves that the algebra of all polynomial functions on $\mathfrak{h}_0$ is generated over the $W$-invariant polynomials by exactly $|W|$ elements, which allows him to show that such a joint eigenfunction must be a linear combination of terms of the form $e^{\langle w(h_1), h_2 \rangle}$ for $w \in W$.  A clever technique for computing the coefficients in this linear combination then allows him to establish the theorem.

The proof presented here also starts with the theory of radial parts of differential operators, but instead we use this theory to obtain solutions to the heat equation on $\mathfrak{h}_0$ from Ad-invariant solutions to the heat equation on $\mathfrak{g}_0$.  The integral (\ref{eqn:hc}) then appears naturally when the heat kernel on $\mathfrak{g}_0$ is averaged over the adjoint orbits.  This allows us to write the right-hand side of (\ref{eqn:hc}) in terms of a particular solution to the heat equation on $\mathfrak{h}_0$.

The following section consists of the proof of Theorem \ref{thm:hc}.

\section{Proof of the integral formula} \label{sec:hc_heat_eqn_proof}

\subsection{Heat equations on $\mathfrak{g}_0$ and $\mathfrak{h}_0$}

Let $\omega$ be the quadratic Casimir polynomial on $\mathfrak{g}$ defined by $\omega(x) = \langle x, x \rangle$.  The Laplacian on $\mathfrak{g}_0$ is the differential operator $\omega(\partial)$.  A function $\phi: \mathfrak{g}_0 \times (0, \infty) \to \mathbb{C}$ satisfies the heat equation on $\mathfrak{g}_0$ if
\begin{equation} \label{eqn:heat} \left(\partial_t + \frac{1}{2} \omega(\partial_x) \right) \phi(x,t) = 0, \quad x \in \mathfrak{g}_0, \ t \in (0, \infty). \end{equation}
Since $G$ is compact and semisimple, the Killing form is negative definite, so that (\ref{eqn:heat}) contains a plus sign instead of the more familiar minus sign for the Euclidean heat equation.  Similarly, $\psi: \mathfrak{h}_0 \times (0, \infty) \to \mathbb{C}$ satisfies the heat equation on $\mathfrak{h}_0$ if 
\begin{equation} \label{eqn:heat-csa} \left(\partial_t + \frac{1}{2} \bar \omega(\partial_h) \right) \psi(h,t) = 0, \quad h \in \mathfrak{h}_0, \ t \in (0, \infty), \end{equation}
where the bar indicates restriction to $\mathfrak{h}_0$.

We want to establish a relationship between the solutions of (\ref{eqn:heat}) and of (\ref{eqn:heat-csa}).  In general, if $\phi$ satisfies (\ref{eqn:heat}), it is not the case that $\bar \phi$ satisfies (\ref{eqn:heat-csa}).  However, we will show
\begin{lemma} \label{lem:soln-reln} If $\phi \in C^2_1(\mathfrak{g}_0 \times (0, \infty))$ solves (\ref{eqn:heat}) and is invariant under the adjoint action of $G$ in the sense that $$\phi(\mathrm{Ad}_g x, t) = \phi(x, t) \quad \forall g \in G,$$ then $\Pi(h) \bar \phi(h, t)$ solves (\ref{eqn:heat-csa}). \end{lemma}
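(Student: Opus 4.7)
The plan is to invoke the classical formula for the radial part of the Laplacian on a semisimple Lie algebra, which the paper has flagged as Proposition \ref{prop:radial-laplacian}. Recall that for an Ad-invariant smooth function $\phi$ on $\mathfrak{g}_0$, the radial part of an Ad-invariant differential operator $D$ is the unique differential operator $\mathrm{rad}(D)$ on $\mathfrak{h}_0$ (defined on the regular set) such that $\overline{D\phi} = \mathrm{rad}(D) \bar\phi$. The key identity I would use is that for the Laplacian $\omega(\partial)$ on $\mathfrak{g}_0$,
\[
\mathrm{rad}(\omega(\partial)) \;=\; \frac{1}{\Pi(h)}\,\bar\omega(\partial_h)\,\Pi(h),
\]
meaning that for every Ad-invariant $\phi$,
\[
\Pi(h)\,\overline{\omega(\partial)\phi}(h) \;=\; \bar\omega(\partial_h)\bigl(\Pi(h)\,\bar\phi(h)\bigr).
\]
This conjugation identity has no lower-order correction term because $\Pi$ is harmonic (it is an alternating polynomial of minimal degree with respect to $W$, hence annihilated by every $W$-invariant constant-coefficient operator of positive degree, in particular by $\bar\omega(\partial_h)$).

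Granted this radial-part identity, the lemma is immediate. First I would note that since $\phi$ is Ad-invariant and of class $C^2$ in $x$ and $C^1$ in $t$, the restriction $\bar\phi$ inherits the same regularity on $\mathfrak{h}_0 \times (0,\infty)$, and differentiation in $t$ commutes with restriction and with multiplication by the time-independent factor $\Pi(h)$. Then, applying the heat equation for $\phi$ and restricting to $\mathfrak{h}_0$,
\[
\partial_t \bar\phi(h,t) \;=\; -\tfrac{1}{2}\,\overline{\omega(\partial_x)\phi}(h,t).
\]
Multiplying both sides by $\Pi(h)$ and invoking the radial-part identity gives
\[
\partial_t\bigl(\Pi(h)\,\bar\phi(h,t)\bigr) \;=\; -\tfrac{1}{2}\,\bar\omega(\partial_h)\bigl(\Pi(h)\,\bar\phi(h,t)\bigr),
\]
which is exactly equation (\ref{eqn:heat-csa}) for $\Pi(h)\bar\phi(h,t)$.

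The only delicate step is the radial-part identity itself, which is the main technical obstacle but is exactly the content of Proposition \ref{prop:radial-laplacian}. If I had to justify it from scratch I would proceed in two steps: compute the radial part of $\omega(\partial)$ on the regular set using the Weyl integration Jacobian $\Pi^2$, producing an expression of the form $\bar\omega(\partial_h) + 2\sum_i (\partial_i \log \Pi)\partial_i$, and then verify that conjugation by $\Pi$ converts this into $\Pi^{-1}\bar\omega(\partial_h)\Pi$ plus a zeroth-order term proportional to $\bar\omega(\partial_h)\Pi/\Pi$, which vanishes by the harmonicity of $\Pi$. A subtlety worth mentioning is that the identity a priori holds only on the regular set $\{\Pi \neq 0\}$; however, both sides of the resulting equation $\partial_t(\Pi\bar\phi) = -\tfrac12\bar\omega(\partial_h)(\Pi\bar\phi)$ are continuous on all of $\mathfrak{h}_0\times(0,\infty)$ by the $C^2_1$ hypothesis, so the equality extends to the singular locus by continuity.
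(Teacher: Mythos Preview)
Your proposal is correct and follows essentially the same route as the paper: invoke Proposition~\ref{prop:radial-laplacian} (the radial-part formula $\gamma(\omega(\partial)) = \Pi^{-1}\bar\omega(\partial)\circ\Pi$), multiply through by $\Pi$, and extend from the regular set to all of $\mathfrak{h}_0$ by continuity. The only cosmetic difference is that the paper first extends $\phi$ trivially to the complexification $\mathfrak{g}$ before applying Proposition~\ref{prop:radial-laplacian}, because the transversality condition (\ref{eqn:transverse}) and hence the radial-part result were set up there via the root space decomposition, whereas you apply the radial-part identity directly over $\mathfrak{g}_0$.
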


Before proving the lemma, we need to briefly introduce some notions from Lie theory and geometric analysis.  Define the set of {\it regular elements} of $\mathfrak{h}$ to be $\mathfrak{h}' = \{ h \in \mathfrak{h} \ | \ \Pi(h) \not = 0 \}$.  The submanifold $\mathfrak{h}_0 \cap \mathfrak{h}' \subset \mathfrak{g}_0$ is {\it transverse} to the adjoint orbits in $\mathfrak{g}_0$ in the sense that for each $h \in \mathfrak{h}_0 \cap \mathfrak{h}'$ we have a decomposition of tangent spaces \begin{equation} \label{eqn:transverse}T_h \mathfrak{g}_0  = T_h \mathcal{O}_h \oplus T_h (\mathfrak{h}_0 \cap \mathfrak{h}'),\end{equation} where $\mathcal{O}_h = \{ \mathrm{Ad}_g h \ | \ g \in G \}$ is the adjoint orbit of $h$.  To see this, consider the root space decomposition of $\mathfrak{g}$, $$\mathfrak{g} = \mathfrak{h} \oplus \bigoplus_\alpha \mathfrak{g}_\alpha,$$ where $\alpha$ runs over the roots of $\mathfrak{g}$ with respect to $\mathfrak{h}$.  Under the usual identification $T_h \mathfrak{g}_0 \cong \mathfrak{g}_0$, at $h \in \mathfrak{h}_0 \cap \mathfrak{h}'$ we have $$T_h (\mathfrak{h}_0 \cap \mathfrak{h}') \cong \mathfrak{h}_0, \quad T_h \mathcal{O}_h \cong [ \mathfrak{g}_0, h] = \mathfrak{g}_0 \cap \bigoplus_\alpha \mathfrak{g}_\alpha,$$ which gives the transversality property.

We now state without proof two results from \cite{SH}.  The first is a special case of \cite[ch.~2, Theorem 3.6]{SH}.

\begin{theorem} \label{thm:radial-part} Let $M \subset \mathfrak{g}_0$ be a submanifold of $\mathfrak{g}_0$ that is transverse to the adjoint orbits in the sense of (\ref{eqn:transverse}). Let $D$ be a differential operator\footnotemark \footnotetext{The theorem holds for the general definition of a (linear) differential operator as any $\mathbb{C}$-linear operator $D$ acting on smooth functions and satisfying $\mathrm{supp}(Df) \subseteq \mathrm{supp}(f)$.  See \cite[ch.~2, Theorem 1.4]{SH} for an explanation of why this definition makes sense.} on $\mathfrak{g}_0$.  Then there exists a unique differential operator $\gamma(D)$ on $M$ such that, for each $\mathrm{Ad}$-invariant function $f: \mathfrak{g}_0 \to \mathbb{C}$, $$\overline{(Df)} = \gamma(D) \bar f,$$ the bar indicating restriction to $M$. \end{theorem}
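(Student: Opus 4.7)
The plan is to construct $\gamma(D)$ locally on $M$ by using the transversality (\ref{eqn:transverse}) to produce adapted coordinates in which $M$ appears as a coordinate slice and $\mathrm{Ad}$-invariant functions depend only on the $M$-directions.

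First I would fix $h_0 \in M$ and choose a vector-space complement $P$ to $\ker \mathrm{ad}_{h_0}$ in $\mathfrak{g}$, so that $\mathrm{ad}_{h_0}: P \to [\mathfrak{g}, h_0] = T_{h_0} \mathcal{O}_{h_0}$ is a linear isomorphism. The map $\Phi: P \times M \to \mathfrak{g}$, $(X, h) \mapsto \mathrm{Ad}_{\exp X} h$, has differential $(X, v) \mapsto [X, h_0] + v$ at $(0, h_0)$, which is invertible by (\ref{eqn:transverse}) since $\dim P + \dim M = \dim \mathfrak{g}$. By the inverse function theorem, $\Phi$ restricts to a diffeomorphism from a neighborhood of $(0, h_0)$ in $P \times M$ onto a neighborhood $U$ of $h_0$ in $\mathfrak{g}$; thus $(X, h)$ serve as local coordinates on $U$ in which $M \cap U = \{X = 0\}$.

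In these coordinates any $\mathrm{Ad}$-invariant function $f$ satisfies $f(\Phi(X, h)) = f(h)$ and is therefore a function of $h$ alone. Writing $D$ locally as
$$D = \sum_{\alpha, \beta} c_{\alpha, \beta}(X, h)\, \partial_X^\alpha \partial_h^\beta$$
(invoking Peetre's theorem to reduce to a finite-order form for the general class of operators mentioned in the footnote), every term with $\alpha \neq 0$ annihilates $f$, and restriction to $X = 0$ yields
$$\overline{Df}(h) = \sum_\beta c_{0, \beta}(0, h)\, \partial_h^\beta \bar f(h).$$
I would take this formula as the definition of $\gamma(D)$ on $M \cap U$; it is manifestly a differential operator with smooth coefficients on $M \cap U$. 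Because the characterization $\overline{Df} = \gamma(D) \bar f$ is coordinate-free, the local operators built for different base points $h_0$ and complements $P$ must agree on overlaps and assemble into a global $\gamma(D)$ on $M$. Uniqueness follows from the abundance of $\mathrm{Ad}$-invariant extensions: any smooth germ $\bar f$ at $h_0 \in M$ extends locally via the coordinate formula $f(X, h) := \bar f(h)$, and this extension can be promoted to a globally $\mathrm{Ad}$-invariant function by multiplication with a $G$-averaged smooth cutoff on $\mathfrak{g}$ (compactness of $G$ guarantees that such $\mathrm{Ad}$-invariant cutoffs exist in abundance).

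The main obstacle is the intrinsic well-definedness of $\gamma(D)$: the coefficients $c_{0, \beta}(0, h)$ depend on the choice of complement $P$, so one must argue through the abstract defining relation that the various coordinate recipes assemble into a single coordinate-free operator on $M$. A secondary technical point is the slice-theoretic construction of globally $\mathrm{Ad}$-invariant extensions when $M$ contains points with large $G$-isotropy, which is precisely the situation in our target application $M = \mathfrak{h}'$, where each regular element is stabilized by the entire maximal torus. Both issues are handled in Helgason's treatment by a careful application of the slice theorem for compact group actions.
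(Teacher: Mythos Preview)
The paper does not actually prove Theorem \ref{thm:radial-part}; it is stated explicitly ``without proof'' as a special case of \cite[ch.~2, Theorem 3.6]{SH}. There is therefore no paper proof to compare against. Your sketch is essentially the standard argument from Helgason's treatment: build slice coordinates $\Phi(X,h)=\mathrm{Ad}_{\exp X}h$ using a complement to $\ker\mathrm{ad}_{h_0}$, observe that $\mathrm{Ad}$-invariant functions are independent of $X$ in these coordinates, and read off $\gamma(D)$ as the $\partial_h$-part of $D$ at $X=0$. The construction and the dimension count from (\ref{eqn:transverse}) are correct, and you rightly flag the two genuine technical points (coordinate-independence of the assembled operator and the existence of enough $\mathrm{Ad}$-invariant extensions for uniqueness) as the places where care is needed.

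One small caution on your uniqueness sketch: the local extension $f(X,h):=\bar f(h)$ is invariant only under the local $\exp(P)$-action, not under all of $G$, and multiplying by a $G$-averaged cutoff will not by itself produce a globally $\mathrm{Ad}$-invariant function restricting to $\bar f$ --- the $G$-average can alter the boundary values on $M$. The cleaner route (and the one Helgason takes) is to observe that uniqueness is a purely local statement: if two differential operators on $M$ agree on all restrictions $\bar f$ of locally $\mathrm{Ad}$-invariant functions near $h_0$, they agree as operators near $h_0$, because such restrictions already include all germs on $M$ via your slice coordinates. No global $\mathrm{Ad}$-invariant extension is needed.
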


The differential operator $\gamma(D)$ is called the {\it radial part of $D$ with transversal manifold $M$}.  In particular, since $\mathfrak{h}_0 \cap \mathfrak{h}'$ satisfies the transversality condition, we can define the radial part of the Laplacian, $\gamma(\omega(\partial))$, such that $$\overline{\omega(\partial) f} = \gamma(\omega(\partial)) \bar f$$ for all Ad-invariant $f: \mathfrak{g}_0 \to \mathbb{C}$, where the bar indicates restriction to $\mathfrak{h}_0 \cap \mathfrak{h}'$.  The next proposition gives an explicit expression for $\gamma(\omega(\partial))$.  It was originally proven in \cite[Theorem 1]{HC} and also appears as \cite[ch.~2, Proposition 3.14]{SH}.

\begin{proposition} \label{prop:radial-laplacian} The radial part of the Laplacian $\omega(\partial)$ with transversal manifold $\mathfrak{h}_0 \cap \mathfrak{h}'$ is given by $$\gamma(\omega(\partial)) = \Pi^{-1} \bar \omega(\partial) \circ \Pi.$$ \end{proposition}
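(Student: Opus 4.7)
The plan is to verify the formula by computing both sides on a generic Ad-invariant function $f$ and checking pointwise equality on $\mathfrak{h}'$. By Theorem \ref{thm:radial-part}, the radial part $\gamma(\omega(\partial))$ is uniquely determined by $\gamma(\omega(\partial))\bar f = \overline{\omega(\partial) f}$ for every Ad-invariant smooth $f$, so it suffices to show that $\Pi^{-1}\bar\omega(\partial)\circ\Pi$ has this property. First I would decompose $\omega(\partial)$ in an orthonormal basis of $\mathfrak{g}_0$ (for the positive-definite inner product $-\langle\cdot,\cdot\rangle$) adapted to $\mathfrak{g}_0 = \mathfrak{h}_0 \oplus \mathfrak{h}_0^\perp$: with an ONB $(H_1,\dots,H_N)$ of $\mathfrak{h}_0$ and a real basis $(Y_k)$ of $\mathfrak{h}_0^\perp$ built from root vectors, $\omega(\partial)$ splits as $\sum_i \partial_{H_i}^2 + \sum_k \partial_{Y_k}^2$ up to a global sign fixed by the signature convention. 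Restricted to $\mathfrak{h}'$, the first sum is $\bar\omega(\partial)\bar f$, so the task reduces to rewriting the transverse piece $\sum_k \partial_{Y_k}^2 f(h)$ in terms of derivatives along $\mathfrak{h}_0$.

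Ad-invariance is what converts the transverse second derivatives into tangential first derivatives. Since $h \in \mathfrak{h}'$ is regular, $\mathrm{ad}_h$ restricts to a bijection $\mathfrak{h}_0^\perp \to \mathfrak{h}_0^\perp$, so each $Y_k$ can be written as $[Z_k,h]$ for a unique $Z_k \in \mathfrak{h}_0^\perp$. Expanding the identity $f(\mathrm{Ad}_{\exp(tZ_k)}h) = f(h)$ to second order in $t$ via $\mathrm{Ad}_{\exp(tZ_k)}h = h + t[Z_k,h] + \tfrac{t^2}{2}[Z_k,[Z_k,h]] + O(t^3)$ yields $\partial_{Y_k}f(h) = 0$ at first order (automatic from Ad-invariance) and, at second order, an expression for $\partial_{Y_k}^2 f(h)$ as a derivative of $\bar f$ along the $\mathfrak{h}_0$-component of $[Z_k, Y_k]$. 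Organizing the basis of $\mathfrak{h}_0^\perp$ into real pairs $(U_\alpha, V_\alpha)$ associated to each positive root $\alpha$ and evaluating $[Z_{U_\alpha}, U_\alpha]$, $[Z_{V_\alpha}, V_\alpha]$ from the standard relations $[h, E_\alpha] = \alpha(h)E_\alpha$ and $[E_\alpha, E_{-\alpha}] \in \mathfrak{h}$, the contributions assemble into
$$\sum_k \partial_{Y_k}^2 f(h) = 2\sum_i \frac{\partial_{H_i}\Pi(h)}{\Pi(h)}\,\partial_{H_i}\bar f(h),$$
where one recognizes on the right the logarithmic derivative $\partial_{H_i}\log\Pi = \sum_\alpha \alpha(H_i)/\alpha$ of the product defining $\Pi$.

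Combining the tangential and transverse contributions with the Leibniz expansion
$$\Pi^{-1}\bar\omega(\partial)(\Pi\bar f) = \bar\omega(\partial)\bar f + 2\sum_i \frac{\partial_{H_i}\Pi}{\Pi}\partial_{H_i}\bar f + \frac{\bar\omega(\partial)\Pi}{\Pi}\bar f$$
reduces the proposition to the harmonicity statement $\bar\omega(\partial)\Pi = 0$. This follows from a degree-and-symmetry argument: $\bar\omega(\partial)$ is $W$-invariant while $\Pi$ satisfies $\Pi(wh) = \epsilon(w)\Pi(h)$, so $\bar\omega(\partial)\Pi$ is a skew $W$-invariant polynomial of degree $\deg\Pi - 2$. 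Because each simple reflection forces a skew invariant to vanish on the corresponding root hyperplane, any skew $W$-invariant is divisible by $\Pi$, and a lower-degree polynomial with this property must be zero.

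The main obstacle is the root-by-root bookkeeping in the second paragraph: pairing the real basis vectors of $\mathfrak{h}_0^\perp$ correctly, tracking the $i$-factors that arise because $\alpha(h)$ is pure imaginary on the compact form $\mathfrak{h}_0$, and fixing normalizations of the $\mathfrak{sl}_2$-triples attached to each root so that the transverse contributions aggregate into exactly $2\,\partial_{H_i}\Pi/\Pi$ and not some other multiple. The $U(N)$ version of this calculation is the partial-fractions identity that appears in Itzykson--Zuber; the general compact semisimple case is a careful extension of the same bookkeeping, and everything else in the argument is either linear algebra in $\mathfrak{h}$ or the Leibniz rule.
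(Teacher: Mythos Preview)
The paper does not prove this proposition at all: it is stated without proof and attributed to Harish-Chandra's original paper and to Helgason's book \cite[ch.~2, Proposition 3.14]{SH}. There is therefore no ``paper's own proof'' to compare against.

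Your sketch is essentially the standard argument one finds in those references, and the strategy is sound: split $\omega(\partial)$ into tangential and transverse pieces relative to $\mathfrak{h}_0$, use Ad-invariance to convert the transverse second derivatives $\partial_{Y_k}^2 f(h)$ into first derivatives of $\bar f$ along $\mathfrak{h}_0$, match the result against the Leibniz expansion of $\Pi^{-1}\bar\omega(\partial)(\Pi\bar f)$, and finish with the harmonicity $\bar\omega(\partial)\Pi = 0$ via the degree-and-skewness argument. One small point worth tightening: in the second-order expansion of $f(\mathrm{Ad}_{\exp(tZ_k)}h)$ you get
\[
\partial_{Y_k}^2 f(h) + \partial_{[Z_k,[Z_k,h]]} f(h) = 0,
\]
so $\partial_{Y_k}^2 f(h) = -\partial_{[Z_k,Y_k]} f(h)$; you should track that minus sign through the root-vector computation, since together with the pure-imaginary values of $\alpha$ on $\mathfrak{h}_0$ it is exactly what produces the correct overall sign in the first-order term $2\Pi^{-1}\nabla\Pi\cdot\nabla\bar f$. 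As you yourself note, the only genuinely delicate part is the normalization bookkeeping for the real basis $(U_\alpha,V_\alpha)$ of each $(\mathfrak{g}_\alpha\oplus\mathfrak{g}_{-\alpha})\cap\mathfrak{g}_0$, and that is where a full write-up would need care; the rest is routine.
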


We now can prove Lemma \ref{lem:soln-reln}.
\begin{proof}[Proof of Lemma \ref{lem:soln-reln}] Restricting (\ref{eqn:heat}) to $\mathfrak{h}_0 \cap \mathfrak{h}'$ and applying Proposition \ref{prop:radial-laplacian} gives the {\it radial heat equation} \begin{equation} \label{eqn:radial-heat} \left( \partial_t + \frac{1}{2} \Pi^{-1}(h) \bar \omega(\partial_{h}) \circ \Pi(h) \right) \bar \phi(h, t) = 0, \quad h \in \mathfrak{h}_0 \cap \mathfrak{h}'.\end{equation} Multiplying both sides by $\Pi$ we have \begin{equation} \label{eqn:rest-heat-soln} \left( \partial_t + \frac{1}{2} \bar \omega(\partial_{h}) \right) \Pi(h) \bar \phi(h, t) = 0, \quad h \in \mathfrak{h}_0 \cap \mathfrak{h}'. \end{equation} Thus $\Pi(h) \bar \phi(h, t)$ solves the heat equation on $\mathfrak{h}_0 \cap \mathfrak{h}'$ and therefore on all of $\mathfrak{h}_0$ by continuity since $\mathfrak{h}_0 \cap \mathfrak{h}'$ is dense.
\end{proof}

\subsection{The $G$-averaged heat kernel}

The {\it heat kernel} on $\mathfrak{g}_0$ is the function $K: \mathfrak{g}_0^2 \times (0, \infty) \to \mathbb{C}$ given by 
\begin{equation} \label{eqn:heat-kernel}
K(x_1, x_2; t) = \left(\frac{1}{2 \pi t} \right )^{\dim \mathfrak{g}_0 /2} e^{\frac{1}{2t} \langle x_1 - x_2, x_1 - x_2 \rangle}.
\end{equation}
It solves the heat equation (\ref{eqn:heat}) where the spatial derivatives act on the $x_2$ variables, with the boundary condition \begin{equation} \label{eqn:bc1} \lim_{t \to 0} K(x_1, x_2; t) = \delta(x_1 - x_2).\end{equation} The limit is understood in the distributional sense of \begin{equation*} \label{eqn:dist-sense} \lim_{t \to 0} \int_{\mathfrak{g}_0} K(x_1, x_2; t) \varphi(x_2) dx_2 = \varphi(x_1) \end{equation*} for $\varphi \in C_c^\infty(\mathfrak{g}_0)$, where $dx_2$ indicates the integration measure induced by the Killing form.  The choice of this integration measure is important, as it guarantees that for all $x_1$ and $t$ we have
$$\int_{\mathfrak{g}_0} K(x_1, x_2; t) dx_2 = 1.$$

Following \cite{IZ}, we define the {\it $G$-averaged heat kernel} as \begin{multline} \label{eqn:k-avg-def} \tilde K(x_1, x_2; t) := \int_G K(\mathrm{Ad}_g x_1, x_2 ; t) dg \\ = \left(\frac{1}{2 \pi t} \right )^{\dim \mathfrak{g}_0 /2} e^{\frac{1}{2t}(\langle x_1, x_1 \rangle + \langle x_2, x_2 \rangle)} I(x_1, -x_2; t),\end{multline} where  $$I(x_1, x_2; t) := \int_G e^{\frac{1}{t} \langle \mathrm{Ad}_g x_1, x_2 \rangle} dg.$$  Observe that for $h_1, h_2 \in \mathfrak{h}_0$, $I(h_1, h_2; 1)$ recovers the integral on the left-hand side of (\ref{eqn:hc}), so that the integral appears naturally in this context.

The function $\tilde K$ is constant on adjoint orbits of both $x_1$ and $x_2$, and by linearity it satisfies the heat equation (\ref{eqn:heat}) on $\mathfrak{g}_0$ as well, so that by Lemma \ref{lem:soln-reln}, $\Pi(h_2) \tilde K(h_1, h_2; t)$ satisfies the heat equation (\ref{eqn:heat-csa}) on $\mathfrak{h}_0$ with the spatial derivatives acting in the $h_2$ variables.  Therefore the function \begin{equation} \label{eqn:v-def} V(h_1, h_2; t) := (2\pi)^{(\dim \mathfrak{g}_0 - N)/2} \Pi(h_1) \Pi(h_2) \tilde K(h_1, h_2; t) \end{equation} also satisfies (\ref{eqn:heat-csa}) and is skew with respect to the action of $W$ on either of $h_1$ or $h_2$ individually. Physically, $V$ can be interpreted as a generalized Slater determinant, with $W$-skewness playing the role of the antisymmetry property of fermions.  In the next step we identify the boundary conditions that $V$ satisfies as $t$ approaches 0.  This will allow us to write an exact expression for $V$ using the fundamental solution to the heat equation on $\mathfrak{h}_0$, yielding (\ref{eqn:hc}).

\subsection{Boundary conditions for $V$}

We now further assume that both $h_1, h_2 \in \mathfrak{h}_0 \cap \mathfrak{h}'$.  In order to compute the distributional limit of $V$ as $t$ approaches 0, we use the steepest-descent method to determine the asymptotics of $I(h_1, h_2; t)$ to leading order in $t$.  We will show:

\begin{lemma} \label{lem:bcs} If $h_1, h_2 \in \mathfrak{h}_0 \cap \mathfrak{h}'$, there exists a constant $C \in \mathbb{R}$ such that \begin{equation} \label{eqn:bcs} \lim_{t \to 0} V(h_1, h_2; t) = C \sum_{w \in W} \epsilon(w) \delta(w(h_1) - h_2) \end{equation} where the distributional sense of the limit is given by integration against test functions in $C_c^\infty(\mathfrak{h}_0)$ with respect to the volume form induced by the restriction of the Killing form to $\mathfrak{h}_0$. \end{lemma}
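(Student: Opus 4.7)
The plan is to analyse the $t \to 0$ asymptotics of the integral $I(h_1, h_2; t) = \int_G e^{\langle \mathrm{Ad}_g h_1, h_2\rangle/t} \, dg$ by the stationary phase method and then recombine with the Gaussian prefactor in $\tilde K$ to identify the distributional limit of $V$. A useful preliminary observation is that since $V$ is $W$-antisymmetric in its second variable, any distributional limit supported on the Weyl orbit $W \cdot h_1$ is automatically of the form $C \sum_{w \in W} \epsilon(w)\delta(h_2 - w(h_1))$ for a single real constant $C$: the antisymmetry ties all the atomic coefficients together, so it suffices to (i) show the support lies in $W\cdot h_1$ and (ii) compute any one atom.

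For (i), I would locate the critical points of the phase $g \mapsto \langle \mathrm{Ad}_g h_1, h_2\rangle$ on $G$. Vanishing of the derivative reduces via $\mathrm{ad}$-invariance of the Killing form to $[\mathrm{Ad}_g h_1, h_2] = 0$; regularity of $h_2$ then forces $\mathrm{Ad}_g h_1 \in \mathfrak{h}$, and regularity of $h_1$ forces $\mathrm{Ad}_g h_1 = w(h_1)$ for some $w \in W$. The critical set is therefore a disjoint union of $|W|$ cosets $g_w T$ of the maximal torus, with respective critical values $\langle w(h_1), h_2\rangle$, so the $t \to 0$ contributions to $I$ concentrate near these cosets and the resulting contributions to $\tilde K$ concentrate near the points $h_2 = w(h_1)$.

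For (ii), I would compute the normal Hessian at the identity coset. Parametrising the normal direction by $Y \in \mathfrak{h}_0^\perp$ via $g = e^Y$ and expanding $\mathrm{Ad}_{e^Y} h_1$ to second order, ad skew-symmetry converts the Hessian to the negative definite form $Y \mapsto -\langle [Y, h_1], [Y, h_2]\rangle$; diagonalising on the root space decomposition of $\mathfrak{h}_0^\perp \otimes \mathbb{C}$, its determinant is proportional to $\Pi(h_1)^2 \Pi(h_2)^2$ with a constant independent of the choice of critical coset. Laplace's method produces the leading contribution $c_1 \, t^{(\dim\mathfrak{g}_0 - N)/2} |\Pi(h_1)\Pi(h_2)|^{-1} e^{\langle h_1, h_2\rangle/t}$ to $I(h_1, h_2; t)$. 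Substituting into (\ref{eqn:k-avg-def}) (with $h_2 \to -h_2$) and completing the square via $\langle h_1, h_1\rangle + \langle h_2, h_2\rangle - 2\langle h_1, h_2\rangle = \langle h_1 - h_2, h_1 - h_2\rangle$, the $t$-powers collapse to $(2\pi t)^{-N/2}$ and this piece of $\tilde K$ becomes the Euclidean heat kernel on $\mathfrak{h}_0$ centred at $h_1$, multiplied by a scalar proportional to $|\Pi(h_1)\Pi(h_2)|^{-1}$. Taking the distributional limit and multiplying by $(2\pi)^{(\dim\mathfrak{g}_0 - N)/2}\Pi(h_1)\Pi(h_2)$ yields a real constant times $\delta(h_2 - h_1)$, and the preliminary observation then propagates this to the full sum $C \sum_w \epsilon(w)\delta(h_2 - w(h_1))$.

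The main obstacle is sign-and-normalisation bookkeeping: the Killing form is negative definite on $\mathfrak{g}_0$ and roots are pure imaginary on the compact Cartan $\mathfrak{h}_0$, so Hessian signs, Gaussian normalisations, and powers of $2\pi$ must combine coherently to yield a real $C$. One must also justify that pointwise stationary-phase asymptotics upgrade to distributional convergence, which follows from uniform remainder bounds on compact subsets of the regular locus together with rapid Gaussian decay away from each $w(h_1)$.
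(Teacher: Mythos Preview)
Your approach is essentially the paper's: both reduce $I$ to an integral over $G/S$, identify the critical cosets with $W$ (your step (i) is the paper's Lemma \ref{lem:crit-pts}), compute the normal Hessian in root-space coordinates (your step (ii) is the paper's Lemma \ref{lem:hessian-det}), and then recombine with the Gaussian prefactor so that each critical value $\langle w(h_1),h_2\rangle$ completes to a heat kernel on $\mathfrak{h}_0$ centred at $w(h_1)$. The one genuine difference is how the signs $\epsilon(w)$ enter: the paper computes the Hessian at \emph{every} critical coset and extracts $\epsilon(w)$ from a branch convention for $\sqrt{\det(-H_{xx})} = \Pi(w(h_1))\Pi(h_2) = \epsilon(w)\Pi(h_1)\Pi(h_2)$, whereas you compute only at the identity coset and then invoke the $W$-skewness of $V$ in $h_2$ to propagate to the other atoms. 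Your shortcut is arguably cleaner, since it sidesteps the branch-choice bookkeeping entirely.

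One small correction: the normal Hessian $Y \mapsto -\langle [Y,h_1],[Y,h_2]\rangle$ is \emph{not} negative definite in general (for instance at $h_2 = h_1$ it is positive definite, and at generic $h_2$ the identity coset is a saddle of the phase on $G/S$), so ``Laplace's method'' applied to $I$ alone does not literally pick out the identity contribution as leading. This does not damage your argument, however, because after multiplying by the prefactor $e^{\frac{1}{2t}(\langle h_1,h_1\rangle + \langle h_2,h_2\rangle)}$ the combined exponent becomes $\frac{1}{2t}\langle \mathrm{Ad}_g h_1 - h_2,\, \mathrm{Ad}_g h_1 - h_2\rangle$, which is genuinely negative definite near each critical pair $(g_w S,\, w(h_1))$; it is this combined Laplace analysis that justifies the distributional limit, and there only $|\det H_{xx}|$ enters---the signs being supplied, as you say, by the antisymmetry of $V$.
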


\begin{proof}
We first rewrite \begin{equation} \label{eqn:rewrite-v} V(h_1, h_2;t) = \frac{t^{-\dim \mathfrak{g}_0 /2}}{(2\pi)^{N/2}} \Pi(h_1) \Pi(h_2) e^{\frac{1}{2t}(\langle h_1, h_1 \rangle + \langle h_2, h_2 \rangle)} I(h_1, -h_2; t).\end{equation}  Next we rewrite $I(h_1, -h_2; t)$ as follows.  Let $S$ be the maximal torus in $G$ with Lie algebra $\mathfrak{h}_0$.  Let $ds$ be the normalized Haar measure on $S$ and let $d(gS)$ be a left-invariant probability measure on $G/S$.  Then by a standard Fubini-type theorem for Lie groups (see e.g. \cite[ch.~1, Theorem 1.9]{SH}) we have \begin{multline} \label{eqn:rewrite-i} I(h_1, -h_2; t) = \int_G e^{\frac{-1}{t} \langle \mathrm{Ad}_g h_1, h_2 \rangle} dg = \int_{G/S} \int_S e^{\frac{-1}{t} \langle \mathrm{Ad}_{gs} h_1, h_2 \rangle} ds \ d(gS) \\ = \int_{G/S} e^{\frac{-1}{t} \langle \mathrm{Ad}_{g} h_1, h_2 \rangle} d(gS).\end{multline}

We will apply the steepest-descent method to the last integral in (\ref{eqn:rewrite-i}).  To do so, we need the following lemmas computing the critical points of the function $gS \mapsto \langle \mathrm{Ad}_g h_1, h_2 \rangle$ along with its Hessian matrix at each critical point.

\begin{lemma} \label{lem:crit-pts}
The critical points of the function $gS \mapsto \langle \mathrm{Ad}_g h_1, h_2 \rangle$ on $G/S$ are the points $gS$ such that $\mathrm{Ad}_g$ represents some $w \in W$ as a linear operator on $\mathfrak{h}$.
\end{lemma}
\begin{proof}
We first note that the map $gS \mapsto \mathrm{Ad}_g h_1$ is a diffeomorphism of $G/S$ onto the adjoint orbit $\mathcal{O}_{h_1} \subset \mathfrak{g}_0$.  Thus we may equivalently find the critical points of the function $x \mapsto \langle x, h_2 \rangle$ for $x \in \mathcal{O}_{h_1}$.  The tangent space at a point $x_0 \in \mathcal{O}_{h_1}$ is $T_{x_0} \mathcal{O}_{h_1}= [x_0, \mathfrak{g}_0],$ and the partial derivative of the linear functional $\langle x, h_2 \rangle$ in the direction of a tangent vector $y$ is equal to $\langle y, h_2 \rangle.$  Thus the condition for $x_0 \in \mathcal{O}_{h_1}$ to be a critical point is $$\langle y, h_2 \rangle = 0 \quad \forall y \in [x_0, \mathfrak{g}_0],$$ or equivalently $$\langle [x_0, y], h_2 \rangle = 0 \quad \forall y \in \mathfrak{g}_0.$$  Using the antisymmetry of the bracket and the invariance of the Killing form, this is equivalent to $$\langle y, [x_0, h_2] \rangle = 0 \quad \forall y \in \mathfrak{g}_0.$$ By the non-degeneracy of the Killing form, this will hold if and only if $[x_0, h_2] = 0$, which implies $x_0 \in \mathfrak{h}_0$.  Writing $x_0 = \mathrm{Ad}_g h_1$, we will have $\mathrm{Ad}_g h_1 \in \mathfrak{h}_0$ exactly when $\mathrm{Ad}_g h_1 = w(h_1)$ for some $w \in W$.  Thus $gS$ corresponds to a critical point exactly when $\mathrm{Ad}_g$ represents an element of the Weyl group.
\end{proof}

\begin{lemma} \label{lem:hessian-det}
Let $H_{xx}$ be the Hessian matrix of the function $gS \mapsto \langle \mathrm{Ad}_g h_1, h_2 \rangle$ at a critical point $g_0 S$, in coordinates on $G/S$ given by $e^xS \mapsto x + \mathfrak{h}_0$.  If $\mathrm{Ad}_{g_0} = w \in W$, then $$\sqrt{\det(-H_{xx})} = \epsilon(w) \Pi(h_1) \Pi(h_2)$$ where the branch of the square root is chosen by writing $\sqrt{\det(- H_{xx})} = \prod_{j=1}^n \sqrt{-\mu_j}$ where $\mu_j$ are the eigenvalues of $H_{xx}$, and taking $|\mathrm{arg} \sqrt{-\mu_j}| < \pi/4$.
\end{lemma}
\begin{proof}
Since $$\mathrm{Ad}_{\exp(x)} = \sum_{j = 0}^\infty \frac{1}{j!} \mathrm{ad}_x^j,$$ expanding to second order in $x$ around the critical point gives \begin{equation} \label{eqn:second-order} \langle \mathrm{Ad}_{\exp(x)} w(h_1), h_2 \rangle = \langle w(h_1), h_2 \rangle + \frac{1}{2}\langle \mathrm{ad}_x^2 w(h_1), h_2 \rangle + O(|x|^3),\end{equation} where $|x| = \sqrt{- \omega(x)}$ is the norm on $\mathfrak{g}_0$.

To obtain the Hessian we must compute explicitly the second-order term in (\ref{eqn:second-order}).  Let $P = \{\alpha_1, \hdots \alpha_r\}$ be the collection of positive roots used to define $\Pi$.  For a root $\alpha$, let $\mathfrak{g}_\alpha \subset \mathfrak{g}$ denote the corresponding root space.  For each $\alpha \in P$, choose $x_\alpha \in \mathfrak{g}_\alpha$, $x_{-\alpha} \in \mathfrak{g}_{-\alpha}$ normalized so that $\langle x_\alpha, x_{-\alpha} \rangle = 1$.  Assuming without loss of generality that $x \in \mathfrak{h}_0^\perp$, we may write \begin{equation} \label{eqn:x_root_decomp} x = \sum_{\alpha \in P} c_\alpha x_\alpha + c_{-\alpha} x_{-\alpha}, \end{equation} and we find by a straightforward calculation that $$\frac{1}{2}\langle \mathrm{ad}_x^2 w(h_1), h_2 \rangle = \sum_{\alpha \in P} \alpha(w(h_1))\alpha(h_2) c_\alpha c_{-\alpha}.$$  Thus $H_{xx}$ is a block-diagonal matrix composed of 2-by-2 blocks of the form $$\begin{bmatrix} 0 & \alpha_i(w(h_1))\alpha_i(h_2) \\ \alpha_i(w(h_1)) \alpha_i(h_2) & 0 \end{bmatrix}.$$  With the appropriate choice of branch for the square root, we find $$\sqrt{\det(-H_{xx})} = \Pi(w(h_1)) \Pi(h_2) = \epsilon(w) \Pi(h_1) \Pi(h_2)$$ as desired.
\end{proof}

Returning now to the proof of Lemma \ref{lem:bcs} and applying the steepest-descent approximation to (\ref{eqn:rewrite-i}) together with Lemmas \ref{lem:crit-pts} and \ref{lem:hessian-det}, we obtain \begin{equation} \label{eqn:i-approx} I(h_1, -h_2; t) = C \frac{\left( 2\pi t \right)^{(\dim \mathfrak{g}_0 -N)/2}}{\Pi(h_1) \Pi(h_2)} \sum_{w \in W} \epsilon(w) e^{\frac{-1}{t}\langle w(h_1), h_2 \rangle}(1 + O(t)) \end{equation} where the constant $C$ arises from the normalization of the measure $d(gS)$.  Substituting this result into (\ref{eqn:rewrite-v}), we find \begin{equation} \label{eqn:v-approx} V(h_1, h_2; t) = C \left( \frac{1}{2 \pi t} \right)^{N/2} \sum_{w \in W} \epsilon(w) e^{\frac{1}{2t} \langle w(h_1) - h_2, w(h_1) - h_2 \rangle}(1 + O(t))\end{equation} as $t \to 0$, which gives the desired limit (\ref{eqn:bcs}).
\end{proof}

Because $V$ solves the heat equation on $\mathfrak{h}_0$, taking the convolution of the boundary data (\ref{eqn:bcs}) with the fundamental solution gives \begin{equation} \label{eqn:exact-v} V(h_1, h_2; t) = C \left( \frac{1}{2 \pi t} \right)^{N/2} \sum_{w \in W} \epsilon(w) e^{\frac{1}{2t} \langle w(h_1) - h_2, w(h_1) - h_2 \rangle}. \end{equation} Thus the higher-order terms on the right-hand side of (\ref{eqn:v-approx}) actually vanish.

\subsection{Normalization} \label{subsec:normalization}

It only remains to rearrange terms and determine the constant $C$.  Evaluating $V$ at $t = 1$, we have
\begin{align*} V(h_1, h_2; 1) &= (2\pi)^{(\dim \mathfrak{g}_0 - N)/2} \Pi(h_1) \Pi(h_2) \tilde K(h_1, h_2; 1) \\
& = (2\pi)^{(\dim \mathfrak{g}_0 - N)/2} \Pi(h_1) \Pi(h_2) \left(\frac{1}{2 \pi} \right )^{\dim \mathfrak{g}_0 /2} e^{\frac{1}{2}(\langle h_1, h_1 \rangle + \langle h_2, h_2 \rangle)} I(h_1, -h_2; 1) \\
& = C \left( \frac{1}{2 \pi} \right)^{N/2} \sum_{w \in W} \epsilon(w) e^{\frac{1}{2} \langle w(h_1) - h_2, w(h_1) - h_2 \rangle}\\
& = C \left( \frac{1}{2 \pi} \right)^{N/2} e^{\frac{1}{2}(\langle h_1, h_1 \rangle + \langle h_2, h_2 \rangle)} \sum_{w \in W} \epsilon(w) e^{\langle w(h_1),-h_2 \rangle}.
\end{align*}
After cancelations, this becomes \begin{equation} \label{eqn:almost-hc} \Pi(h_1) \Pi(h_2) I(h_1, h_2; 1) = C \sum_{w \in W} \epsilon(w) e^{\langle w(h_1),h_2 \rangle}.\end{equation}
Up to this point we have assumed that $h_1, h_2 \in \mathfrak{h}_0 \cap \mathfrak{h}'$, but we can immediately remove this assumption: since both sides of (\ref{eqn:almost-hc}) are analytic in $h_1$ and $h_2$, this identity holds for all $h_1, h_2 \in \mathfrak{h}$.

Finally, we determine $C$.  Applying $\Pi(\partial_{h_1})$ to both sides of (\ref{eqn:almost-hc}) and evaluating at $h_1 = 0$, we obtain $$\Pi(h_2) [ \! [ \Pi, \Pi ] \! ] = C |W| \Pi(h_2), \quad h_2 \in \mathfrak{h},$$
so that $C = [ \! [ \Pi, \Pi ] \! ] / |W|$.  This completes the proof of Theorem \ref{thm:hc}.

\section{Concluding remarks} \label{sec:concl_remarks}

\subsection{Relationship to Calogero--Moser systems} \label{subsec:cm_relationship}
To illustrate the relationship between Harish-Chandra integrals and Calogero--Moser systems, we observe that by Lemma \ref{lem:soln-reln} the function $$ W(h_1, h_2; t) = \frac{1}{N^2} \log \tilde K(\sqrt{N} h_1, \sqrt{N} h_2 ; t) $$ satisfies \begin{equation} \label{eqn:free_energy_pde} 2 \frac{\partial W}{\partial t} = N |\nabla W|^2 + \frac{2}{N} \nabla(\log \Pi ) \cdot \nabla W - \frac{1}{N} \bar \omega(\partial) W - \frac{1}{N^3} \Pi^{-1} \bar \omega(\partial) \Pi\end{equation} where spatial derivatives act in the $h_2$ variables and $\Pi = \Pi(h_2)$.  Since $\Pi$ is harmonic (see \cite[Proposition 4]{CM}), the last term on the right-hand side of (\ref{eqn:free_energy_pde}) vanishes.  To compute the large-$N$ asymptotics for the integral over $U(N)$ in \cite{AM}, Matytsin drops the term $N^{-1} \bar \omega(\partial) W$, arguing heuristically that it should be subdominant as $N \to \infty$.  If we neglect this term and make the substitution $W = S - N^{-2} \log \Pi$, then we arrive at \begin{equation} \label{eqn:CM_preHJ} 2 \frac{\partial S}{\partial t} = N |\nabla S|^2 - N^{-3} |\nabla (\log \Pi) |^2. \end{equation}  In fact (\ref{eqn:CM_preHJ}) is (a scaling in $N$ of) the Hamilton-Jacobi equation for the rational Calogero--Moser system associated to the root system of the algebra $\mathfrak{g}$.  From this observation we expect that the large-$N$ asymptotics of Harish-Chandra integrals can be understood in terms of hydrodynamic scaling limits of Calogero--Moser systems.  This is known to be true for integrals over $U(N)$, as in \cite{AM, GZ} the leading-order asymptotics of the HCIZ integral are derived in terms of a particular solution to the complex Burgers' equation.  As explained in \cite{GM}, the complex Burgers' equation also arises as a hydrodynamic limit of the Calogero--Moser system associated to the $A_N$ root system.

\subsection{Proof of the integral formula via symplectic geometry} \label{subsec:symplectic_proof}
Below we give an alternate proof of Theorem \ref{thm:hc} using a localization technique in symplectic geometry.  While the existence of such a proof was observed already in \cite{DH}, here we present the details of the argument with the goal of making this derivation accessible to non-specialists.

The symplectic geometry approach illustrates a completely different perspective on the Harish-Chandra integral.  Rather than writing the integral in terms of the heat kernel on $\mathfrak{g}_0$, we instead view it as an oscillatory integral over a coadjoint orbit in $\mathfrak{g}_0^*$.  The proof begins with the steepest-descent approximation (\ref{eqn:i-approx}) for $I(h_1, h_2; t)$ obtained in the proof of Lemma \ref{lem:bcs}, which becomes a stationary-phase approximation when $t$ is taken to be imaginary.  Then, instead of observing that $V$ solves the heat equation on $\mathfrak{h}_0$, we use the Duistermaat--Heckman theorem to deduce that the stationary-phase approximation is exact, after which it remains only to compute the normalization constant following the argument of Section \ref{subsec:normalization} above.

We first state the Duistermaat--Heckman theorem, which was proved in \cite{DH} and later shown to be an instance of a more general principle of {\it equivariant localization} \cite{AB}.  Let $(M, \omega)$ be a compact symplectic manifold of dimension $2n$, and let $S$ be a $d$-dimensional torus acting smoothly on $M$, with Lie algebra $\mathfrak{s}$.  For each $s \in \mathfrak{s}$ we define a vector field $X_s$ on $M$ by \begin{equation} \label{eqn:torus_vf_def} X_s(x) f = \frac{d}{dt} \bigg |_{t = 0} f(\exp(ts) \cdot x), \quad x \in M, \ f \in C^\infty(M).\end{equation} We assume that there is a {\it moment map} for the action of $S$ on $M$, that is a smooth function $\Phi: M \to \mathfrak{s}^*$ such that \begin{equation} \label{eqn:moment_map_def} \iota_{X_s} \omega(\cdot) = - ( d\Phi(\cdot), s ), \quad s \in \mathfrak{s},\end{equation} where $( \cdot, \cdot )$ is the duality pairing of $\mathfrak{s}^*$ and $\mathfrak{s}$.\footnotemark \footnotetext{Some authors define the moment map with the opposite sign in (\ref{eqn:moment_map_def}).}  The {\it Liouville measure} $\mu$ on $M$ is given by the volume form $\omega^{\wedge n}/n!$.  Then we have the following:

\begin{theorem}[Duistermaat--Heckman] \label{thm:dh}
The integral \begin{equation} \label{eqn:dh-int} \int_{M} e^{it(\Phi(x), s)} d\mu(x) \end{equation} is exactly equal to its leading-order approximation by the method of stationary phase as $t \to \infty$.\footnotemark \footnotetext{The Duistermaat--Heckman theorem is sometimes stated differently, as follows: every regular value of $\Phi$ has a neighborhood in which $\Phi_*\mu$ is equal to Lebesgue measure times a polynomial of degree at most $n-d$.  Theorem \ref{thm:dh} follows from this statement by observing that the integral (\ref{eqn:dh-int}), considered as a function of $t$, is the inverse Fourier transform of the measure $(\Phi, s)_* \mu$, i.e. the pushforward of $\mu$ by the map $(\Phi(\cdot), s): \mathfrak{s}^* \to \R$.}
\end{theorem}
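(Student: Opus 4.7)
The plan is to deduce the theorem from the Atiyah-Bott-Berline-Vergne (ABBV) equivariant localization formula, the viewpoint already alluded to in \cite{AB}. First I would observe that the moment map condition (\ref{eqn:moment_map_def}) says precisely that $X_s$ is the Hamiltonian vector field of the smooth function $f := (\Phi(\cdot), s): M \to \R$. In particular, the critical points of $f$ coincide with the zeros of $X_s$, i.e.\ with the fixed points of the one-parameter subgroup $\exp(\R s) \subseteq S$, and these will serve as the localization set for the subsequent argument.

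Next I would work in the Cartan model of $S$-equivariant cohomology with equivariant differential $d_S := d - \iota_{X_s}$, and check that the inhomogeneous form $\omega_S := \omega + f$ is $d_S$-closed: indeed $d_S \omega_S = d\omega + df - \iota_{X_s}\omega$ vanishes by $d\omega = 0$ together with (\ref{eqn:moment_map_def}). Hence $e^{it\omega_S}$ is equivariantly closed, and extracting its top-degree component in the de Rham grading gives
\[
\int_M e^{it\omega_S} \,=\, (it)^n \int_M e^{itf}\, d\mu,
\]
which converts the oscillatory integral of interest into a single equivariantly closed integral ripe for localization.

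I would then invoke the ABBV theorem, which for any equivariantly closed form $\alpha$ asserts
\[
\int_M \alpha \,=\, \sum_F \int_F \frac{i_F^* \alpha}{e_S(\nu_F)},
\]
summed over the connected components $F$ of the fixed set of $\exp(\R s)$, with $e_S(\nu_F)$ the equivariant Euler class of the normal bundle. For generic $s$ the fixed set is a finite collection of isolated points, at each of which the contribution is an explicit exponential divided by a product of weights. A direct linear-algebra computation, using that the Hessian of $f$ at a fixed point $p$ is $\omega_p(A\,\cdot,\cdot)$ with $A = \nabla X_s|_p$ and that the weights of the $S$-action on $T_pM$ are the eigenvalues of $A$ in a compatible complex structure, identifies each such contribution with the leading-order stationary-phase term at $p$. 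Summing over $p$ and dividing by $(it)^n$ yields the stationary-phase approximation of the original integral; because ABBV is an exact identity rather than an asymptotic one, this exhibits the desired exactness.

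The principal obstacle is ABBV itself, whose proof consists in constructing a suitable equivariant primitive of $\alpha$ on the complement of the fixed set and is the real mathematical content of the argument; once ABBV is taken as a black box every remaining step is formal. Non-generic $s$, for which $\exp(\R s)$ may have positive-dimensional fixed components, can be handled either by the clean-intersection generalization of ABBV or by continuity from generic $s$; alternatively, Duistermaat and Heckman's original route bypasses equivariant cohomology entirely by showing that $\Phi_*\mu$ is piecewise polynomial on the regular values of $\Phi$ and then invoking Fourier inversion.
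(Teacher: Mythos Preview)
The paper does not actually prove Theorem~\ref{thm:dh}; it is stated as a known result and attributed to \cite{DH}, with the remark that it was later recognized as an instance of equivariant localization \cite{AB}. There is therefore no ``paper's own proof'' to compare against.

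That said, your proposal is a correct and standard derivation via the Atiyah--Bott--Berline--Vergne formula, which is exactly the viewpoint the paper gestures at by citing \cite{AB}. The steps are sound: the equivariant closedness of $\omega_S = \omega + f$ follows from (\ref{eqn:moment_map_def}) as you indicate; extracting the top-degree part correctly produces the factor $(it)^n$ against the Liouville integral; and the identification of each fixed-point contribution in ABBV with the corresponding stationary-phase term is a routine linear-algebra check once one writes the Hessian of $f$ at $p$ as $\omega_p(\nabla X_s|_p\,\cdot,\cdot)$. You are right that the genuine content is ABBV itself, which you treat as a black box. Your closing remark about the alternative route---showing $\Phi_*\mu$ is piecewise polynomial and applying Fourier inversion---is precisely the formulation the paper records in its footnote and is Duistermaat and Heckman's original argument in \cite{DH}. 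So your sketch is both correct and well-aligned with the two references the paper invokes in lieu of a proof.
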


To prove the exactness of (\ref{eqn:i-approx}) from Theorem \ref{thm:dh}, we re-interpret $I$ in the language of symplectic geometry.  For $x \in \mathfrak{g}_0$ let $x^* \in \mathfrak{g}_0^*$ be its dual under the Killing form, that is $x^*(y) = \langle x, y \rangle$ for $x, y \in \mathfrak{g}_0$.  Let $\beta = h^* \in \mathfrak{h}_0^*$.  We define the {\it coadjoint orbit} $$\mathcal{O}^*_{\beta} = \{ \mathrm{Ad}_g^* \beta \ | \ g \in G \} \subset \mathfrak{g}_0^*,$$ where $\mathrm{Ad}^*$ is the coadjoint representation of $G$ on $\mathfrak{g}_0^*$, defined by $$\mathrm{Ad}^*_g x^* = (\mathrm{Ad}_{g^{-1}} x)^*.$$

At a point $x^* \in \mathcal{O}^*_{\beta}$, under the usual identification $T_{x^*}\mathfrak{g}_0^* \cong \mathfrak{g}_0^*$, we have $$T_{x^*} \mathcal{O}^*_{\beta} = \{ [x, y]^* \ | \ y \in \mathfrak{g}_0^* \}.$$  The {\it Kirillov-Kostant-Souriau form} is the $G$-invariant 2-form $\omega$ on $\mathcal{O}^*_{\beta}$ defined by \begin{equation} \label{eqn:kks_form} \omega_{x^*}([x, y]^*, [x, z]^*) = \langle x, [y, z] \rangle. \end{equation}  This form can be shown by direct computation to be non-degenerate and closed, and it therefore makes $\mathcal{O}^*_{\beta}$ into a symplectic manifold \cite[ch. 1]{AK}.  Let $S \subset G$ be the maximal torus with tangent space $\mathfrak{h}_0$.  Then $S$ acts smoothly on $\mathcal{O}^*_\beta$ via the coadjoint representation.

Let $\Phi: \mathcal{O}^*_\beta \to \mathfrak{h}_0^*$ be the orthogonal projection onto $\mathfrak{h}_0^*$.  We will show that $\Phi$ is a moment map for the action of $S$ on $\mathcal{O}^*_\beta$.  Observe that from the definition of $\Phi$ as an orthogonal projection, we have \begin{equation} \label{eqn:phi_killing} (\Phi(x^*),s) = \langle x, s \rangle, \quad s \in\mathfrak{h}_0,\ x^* \in \mathcal{O}^*_\beta.  \end{equation}  Moreover, $d\Phi([x,y]^*)$ is also just the orthogonal projection of $[x,y]^*$ onto $\mathfrak{h}_0^*$, so that $$-(d\Phi([x,y]^*),s) = - \langle [x,y], s \rangle = \langle x, [s, y] \rangle = \omega_{x^*}([x,s]^*, [x,y]^*).$$ A direct computation from (\ref{eqn:torus_vf_def}) gives $X_s(x^*) = [x,s]^*$, so that $\Phi$ satisfies (\ref{eqn:moment_map_def}) and therefore is a moment map as desired.

Next we relate the Liouville measure on $\mathcal{O}^*_\beta$ to Haar measure on $G$.  Let $2n = \dim \mathcal{O}^*_{\beta}.$  The Liouville measure $\omega^{\wedge n}/n!$ is $G$-invariant due to the invariance of $\omega$.  Pulling the Liouville measure back along the map $\mathrm{Ad}^* \beta: G \to \mathcal{O}^*_{\beta}$, we obtain a finite invariant measure on $G$, which by the uniqueness of Haar measure must equal a constant times $dg$.  Thus for a Borel set $E \subset G$ we have

\begin{equation} \label{eqn:liouville_haar_measures}
\int_E dg = \frac{1}{\mathrm{vol}_\mu(\mathcal{O}^*_\beta)} \int_{\{ \mathrm{Ad}^*_g \beta \ | \ g \in E \}} \frac{\omega^{\wedge n}}{n!}.
\end{equation}

Putting together (\ref{eqn:phi_killing}) and (\ref{eqn:liouville_haar_measures}) and writing $d\mu = \omega^{\wedge n}/n!$, we can express $I(h_1, h_2; t)$ as an integral over $\mathcal{O}_{h_1^*}^*$:
$$I(h_1, h_2; t) = \int_G e^{\frac{1}{t} \langle \mathrm{Ad}_g h_1, h_2 \rangle} dg = \frac{1}{\mathrm{vol}_\mu(\mathcal{O}^*_{h_1^*})} \int_{\mathcal{O}_{h_1^*}^*} e^{\frac{1}{t} (\Phi(\beta), h_2)} d\mu(\beta).$$  This function is analytic in $t$ for $t \in (0, \infty)$, and so by analytic continuation we can equivalently consider $$I(h_1, h_2; -it^{-1}) = \frac{1}{\mathrm{vol}_\mu(\mathcal{O}^*_{h_1^*})} \int_{\mathcal{O}_{h_1^*}^*} e^{it (\Phi(\beta), h_2)} d\mu(\beta).$$
We have now written $I$ in the form (\ref{eqn:dh-int}), so that by Theorem \ref{thm:dh} and (\ref{eqn:i-approx}) we have $$I(h_1, h_2; -it^{-1}) = C \left( \frac{2\pi i}{t} \right)^{(\dim \mathfrak{g}_0 -N)/2}(\Pi(h_1) \Pi(h_2))^{-1} \sum_{w \in W} \epsilon(w) e^{it \langle w(h_1), h_2 \rangle}$$ for some constant $C$. The proof concludes by evaluating at $t = -i$ and computing $C$ as in Section \ref{subsec:normalization}.

\subsection{Relationship to irreducible characters} \label{subsec:repthy_proof}

The Harish-Chandra integral also has an interpretation in terms of the irreducible characters of $G$.  Here we discuss how Theorem \ref{thm:hc} is essentially equivalent to the Kirillov character formula in the case of a compact, connected, semisimple group.

Let $\lambda \in i \mathfrak{h}_0^*$ be the highest weight of an irreducible representation of $G$ with character $\chi_\lambda$, and let $\rho = \frac{1}{2}\sum_{\alpha \in P} \alpha$.  Let $\mu$ be the Liouville measure associated to the Kirillov-Kostant-Souriau form on the coadjoint orbit $\mathcal{O}^*_{-i(\lambda+\rho)}$.  The {\it Kirillov character formula} for compact groups \cite[ch. 5, Theorem 9]{AK} says:

\begin{theorem} \label{thm:kirillov_char}
For $h \in \mathfrak{h}_0 \cap \mathfrak{h}'$,
\begin{equation} \label{eqn:kirillov_char} \chi_\lambda(e^h) = \frac{\Pi(h)}{\prod_{\alpha \in P} (e^{\alpha(h)/2} - e^{-\alpha(h)/2})} \int_{\mathcal{O}^*_{-i(\lambda + \rho)}} e^{i(\beta, h)} d\mu(\beta). \end{equation}
\end{theorem}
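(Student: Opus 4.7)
The plan is to derive Theorem \ref{thm:kirillov_char} from Theorem \ref{thm:hc} by combining the coadjoint orbit interpretation developed in Section \ref{subsec:symplectic_proof} with the classical Weyl character formula. This reflects the paper's assertion that Kirillov's formula is essentially equivalent to the Harish-Chandra integral in the compact semisimple setting.

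First, using (\ref{eqn:phi_killing}) and (\ref{eqn:liouville_haar_measures}), I would rewrite the coadjoint orbit integral on the right-hand side of (\ref{eqn:kirillov_char}) as a Haar integral over $G$:
$$\int_{\mathcal{O}^*_{\lambda+\rho}} e^{(\beta, h)}\, d\mu(\beta) = \mathrm{vol}_\mu(\mathcal{O}^*_{\lambda+\rho})\cdot I(\eta, h; 1),$$
where $\eta \in \mathfrak{h}_0$ denotes the Killing-dual of $\lambda+\rho$ (so that $\langle \eta, y \rangle = (\lambda + \rho, y)$ for all $y \in \mathfrak{h}_0$). Next, Theorem \ref{thm:hc} evaluates $I(\eta, h; 1)$ as
$$\frac{[\![\Pi,\Pi]\!]}{|W|\,\Pi(\eta)\,\Pi(h)} \sum_{w \in W} \epsilon(w)\, e^{(w(\lambda+\rho), h)},$$
where $W$-invariance of the Killing form converts $\langle w(\eta), h \rangle$ into the dual pairing $(w(\lambda+\rho), h)$. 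The Weyl character formula then recognizes the Weyl-antisymmetric numerator as $\chi_\lambda(e^h)$ times the Weyl denominator $\prod_{\alpha \in P}(e^{\alpha(h)/2} - e^{-\alpha(h)/2})$.

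Assembling these three ingredients, Kirillov's formula (\ref{eqn:kirillov_char}) reduces to verifying the single normalization identity
$$\mathrm{vol}_\mu(\mathcal{O}^*_{\lambda+\rho}) = \frac{|W|\,\Pi(\eta)}{[\![\Pi,\Pi]\!]}.$$
I would establish this either by a direct calculation of the Liouville volume using the fibration $G/S \to \mathcal{O}^*_{\lambda+\rho}$ together with the explicit expression for the Kirillov-Kostant-Souriau form on the root spaces (so that $\omega^{\wedge n}/n!$ factors as $\Pi(\eta)^2$ times the pullback of normalized Haar measure on $G/S$), or indirectly by extending the identity by analyticity after matching a single known case.

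The main obstacle is the bookkeeping in this volume computation: the ambient normalization constant from Section \ref{subsec:symplectic_proof} must be tracked through the identifications (\ref{eqn:phi_killing})--(\ref{eqn:liouville_haar_measures}), and the relationship between $[\![\Pi,\Pi]\!]$ and the product $\Pi(\eta)$ of root values on $\eta$ is nontrivial. A clean workaround is to take the limit $h \to 0$ in the resulting candidate identity, using the Weyl dimension formula $\dim V_\lambda = \prod_{\alpha \in P}(\lambda+\rho, \alpha)/(\rho,\alpha)$ and repeated differentiation to pin down the constant, after which joint analyticity in $h$ and in $\lambda + \rho$ extends the identity to all regular $h \in \mathfrak{h}_0 \cap \mathfrak{h}'$.
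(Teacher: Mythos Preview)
The paper does not actually prove Theorem~\ref{thm:kirillov_char}; it is quoted from \cite[ch.~5, Theorem~9]{AK} and then \emph{used} (together with the Weyl character formula) to give yet another derivation of Theorem~\ref{thm:hc}. What you propose is precisely the reverse implication, which the paper itself identifies as ``one standard proof of Theorem~\ref{thm:kirillov_char} dating to Kirillov's paper \cite{AK2}'': apply Theorem~\ref{thm:weyl_char} to the right-hand side of (\ref{eqn:hc}). So your approach is correct and is exactly the argument the paper alludes to but does not write out.

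Regarding the normalization you flag as the ``main obstacle'': the identity
\[
\mathrm{vol}_\mu(\mathcal{O}^*_{h_1^*}) = \frac{|W|}{[\![\Pi,\Pi]\!]}\,\Pi(h_1)
\]
is precisely (\ref{eqn:coadjoint_volume}) in the paper, and it is obtained there by the same trick you suggest---apply $\Pi(\partial_{h_2})$ to both sides and evaluate at $h_2=0$. So there is no real difficulty here; the bookkeeping you are worried about has already been carried out in Section~\ref{subsec:normalization} and reused verbatim in Section~\ref{subsec:repthy_proof}. Your alternative via the Weyl dimension formula would also work but is unnecessary.
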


Although here we assume that $G$ satisfies the assumptions of Theorem \ref{thm:hc}, versions of this formula hold in a variety of situations even for non-compact groups; see the book \cite{AK} for a detailed discussion.

On the other hand, the {\it Weyl character formula} (see e.g. \cite[Theorem 25.4]{DB}) states:
\begin{theorem} \label{thm:weyl_char}
For $h \in \mathfrak{h}_0 \cap \mathfrak{h}'$,
\begin{equation} \label{eqn:weyl_char}
\chi_\lambda(e^h) = \frac{\sum_{w \in W} \epsilon(w) e^{(w(\lambda + \rho), h)}}{\prod_{\alpha \in P} (e^{\alpha(h)/2} - e^{-\alpha(h)/2})}.
\end{equation}
\end{theorem}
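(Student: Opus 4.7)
The plan is to derive the Weyl character formula by combining the Kirillov character formula (Theorem \ref{thm:kirillov_char}) with Harish-Chandra's integral formula (Theorem \ref{thm:hc}), and then fixing the overall constant by a limiting argument. This realizes the ``essential equivalence'' of these formulas advertised at the start of Section \ref{subsec:repthy_proof}.

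First I would push the integral in (\ref{eqn:kirillov_char}) from the coadjoint orbit down to the group using the measure relation (\ref{eqn:liouville_haar_measures}): identifying $\lambda + \rho \in \mathfrak{h}_0^*$ with its Killing-form dual $h_{\lambda+\rho} \in \mathfrak{h}_0$ and parametrizing $\beta = \mathrm{Ad}^*_g(\lambda+\rho)$, the pairing $(\beta, h)$ becomes $\langle \mathrm{Ad}_g h_{\lambda+\rho}, h\rangle$ after using unimodularity of $G$ to absorb the inversion in the paper's coadjoint convention. This gives
$$\int_{\mathcal{O}^*_{\lambda+\rho}} e^{(\beta, h)} d\mu(\beta) = \mathrm{vol}_\mu(\mathcal{O}^*_{\lambda+\rho}) \int_G e^{\langle \mathrm{Ad}_g h_{\lambda+\rho}, h \rangle} dg.$$
Next I would apply Theorem \ref{thm:hc} with $h_1 = h_{\lambda+\rho}$ and $h_2 = h$, using the intertwining of the Weyl action on $\mathfrak{h}_0$ with its dual action on $\mathfrak{h}_0^*$ to rewrite $\langle w(h_{\lambda+\rho}), h\rangle$ as $(w(\lambda+\rho), h)$. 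Substituting into (\ref{eqn:kirillov_char}), the factor $\Pi(h)$ in front of the Kirillov integral cancels the $\Pi(h)^{-1}$ produced by Theorem \ref{thm:hc}, leaving
$$\chi_\lambda(e^h) = \frac{\mathrm{vol}_\mu(\mathcal{O}^*_{\lambda+\rho})\, [ \! [ \Pi, \Pi ] \! ]}{|W|\, \Pi(h_{\lambda+\rho})} \cdot \frac{\sum_{w \in W} \epsilon(w) e^{(w(\lambda+\rho), h)}}{\prod_{\alpha \in P}(e^{\alpha(h)/2} - e^{-\alpha(h)/2})},$$
which matches (\ref{eqn:weyl_char}) up to the scalar prefactor.

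The main obstacle is therefore to show that this prefactor equals $1$. I would pin it down by a regular-limit argument as $h \to 0$: the left-hand side tends to $\chi_\lambda(e) = \dim V_\lambda$; on the Kirillov side the ratio $\Pi(h)/\prod_\alpha(e^{\alpha(h)/2} - e^{-\alpha(h)/2})$ tends to $1$ since the denominator expands as $\Pi(h) + O(|h|^{r+2})$, forcing $\mathrm{vol}_\mu(\mathcal{O}^*_{\lambda+\rho}) = \dim V_\lambda$ (the Kirillov-Kostant volume formula). A parallel Taylor expansion of the antisymmetric exponential sum, in which all polynomial terms of degree less than $r$ vanish by $W$-antisymmetry, identifies the leading behavior of the Weyl-formula side in terms of $\Pi(h_{\lambda+\rho})\Pi(h)$, and matching with $\dim V_\lambda$ yields both the Weyl dimension formula $\dim V_\lambda = \Pi(h_{\lambda+\rho})/\Pi(h_\rho)$ and the constant-term identity $[ \! [ \Pi, \Pi ] \! ] = |W|\, \Pi(h_\rho)$. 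I expect this last identity to be the delicate step, since it relates the symmetric-algebra inner product (\ref{eqn:bracket-def}) to a Macdonald-type evaluation of $\Pi$ at $h_\rho$; the cleanest path is probably to verify it directly by antisymmetrizing the expression $\Pi(\partial)\Pi(z)|_{z=0}$ over the Weyl group.
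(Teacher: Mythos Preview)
The paper does not prove Theorem~\ref{thm:weyl_char}; it simply quotes the Weyl character formula with a reference to \cite[Theorem~25.4]{DB} and then uses it, together with the Kirillov formula, to give an alternative derivation of Theorem~\ref{thm:hc}. So there is no ``paper's own proof'' to compare against, and your proposal is really an attempt to run the paper's Section~\ref{subsec:repthy_proof} argument in reverse: Kirillov $+$ Harish--Chandra $\Rightarrow$ Weyl.

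That reversal is legitimate in principle, but your handling of the normalization constant has a genuine gap. After substituting Theorem~\ref{thm:hc} into (\ref{eqn:kirillov_char}) you correctly arrive at
\[
\chi_\lambda(e^h) \;=\; C\,\frac{\sum_{w}\epsilon(w)e^{(w(\lambda+\rho),h)}}{\prod_{\alpha\in P}(e^{\alpha(h)/2}-e^{-\alpha(h)/2})},
\qquad
C=\frac{\mathrm{vol}_\mu(\mathcal{O}^*_{\lambda+\rho})\,[\![\Pi,\Pi]\!]}{|W|\,\Pi(h_{\lambda+\rho})}.
\]
Your plan is to pin down $C$ by sending $h\to 0$. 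But this limit carries no information about $C$. On the left you get $\dim V_\lambda$. On the right, the denominator is $\Pi(h)+O(|h|^{r+2})$, while the leading term of the numerator is $c\,\Pi(h)$ with $c$ determined by applying $\Pi(\partial_h)$ at $h=0$, which gives $c=|W|\,\Pi(h_{\lambda+\rho})/[\![\Pi,\Pi]\!]$. Hence the right-hand limit is $C\cdot c=\mathrm{vol}_\mu(\mathcal{O}^*_{\lambda+\rho})$, and the equation collapses to $\dim V_\lambda=\mathrm{vol}_\mu(\mathcal{O}^*_{\lambda+\rho})$, which you had already obtained directly from the Kirillov formula. The limit is therefore a tautology and cannot ``yield both the Weyl dimension formula and the constant-term identity'' as you claim: it yields one equation, already known, not two.

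To close the argument you would need an \emph{independent} computation of either $\mathrm{vol}_\mu(\mathcal{O}^*_{\lambda+\rho})$ (for instance directly from the Kirillov--Kostant--Souriau form, giving the paper's (\ref{eqn:coadjoint_volume})), or of $\dim V_\lambda$ by some route that does not pass through the Weyl character formula itself. Invoking the Weyl dimension formula here is circular, since that formula is the $h\to 0$ limit of the very identity you are trying to prove. Your proposed direct verification of $[\![\Pi,\Pi]\!]=|W|\,\Pi(h_\rho)$ does not rescue the situation by itself: even granting it, you would still need $\dim V_\lambda=\Pi(h_{\lambda+\rho})/\Pi(h_\rho)$ to conclude $C=1$.
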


Note that either of (\ref{eqn:kirillov_char}) or (\ref{eqn:weyl_char}) completely determines $\chi_\lambda$. Since the character is analytic it is determined on the maximal torus $\exp(\mathfrak{h}_0)$ by its values on $\exp(\mathfrak{h}_0 \cap \mathfrak{h}')$, and since it is a class function it is determined on all of $G$ by its restriction to a maximal torus.

One standard proof of Theorem \ref{thm:kirillov_char} dating to Kirillov's paper \cite{AK2} works by applying Theorem \ref{thm:weyl_char} to the right-hand side of (\ref{eqn:hc}). However, the Kirillov formula can also be proven by other methods, for example by applying the equivariant index theorem to the Dirac operator twisted by the Kostant--Souriau line bundle on the coadjoint orbit \cite{BGV}. Nothing stops us, therefore, from using (\ref{eqn:kirillov_char}) to prove (\ref{eqn:hc}) instead, as there is no circularity involved. For the sake of completeness, we record this proof below. \\

Equating the right-hand sides of (\ref{eqn:kirillov_char}) and (\ref{eqn:weyl_char}), writing $h = h_2$ and $\lambda + \rho = ih_1^*$, and using the relation (\ref{eqn:liouville_haar_measures}) between Liouville measure on $\mathcal{O}^*_{h_1^*}$ and Haar measure on $G$, we obtain \begin{multline*} \Pi(h) \int_{\mathcal{O}^*_{-i(\lambda + \rho)}} e^{i(\beta, h)} d\mu(\beta) \\ = \mathrm{vol}_\mu(\mathcal{O}^*_{h_1^*}) \Pi(h_2) \int_G e^{i \langle \mathrm{Ad}_g h_1, h_2 \rangle} dg = \sum_{w \in W} \epsilon(w) e^{i \langle w(h_1), h_2 \rangle}.\end{multline*}

Applying $\Pi(\partial_{h_2})$ to both sides and evaluating at $h_2 = 0$, we find \begin{equation} \label{eqn:coadjoint_volume} \mathrm{vol}_\mu(\mathcal{O}^*_{h_1^*}) = \frac{|W|} {[\! [ \Pi, \Pi ] \! ]} \Pi(h_1),\end{equation} which matches the normalization in (\ref{eqn:hc}). It remains to remove the assumptions that $h_2 \in \mathfrak{h}_0 \cap \mathfrak{h}'$ and $i h_1^* = \lambda + \rho$ for $\lambda$ a dominant weight.  Since the right-hand side of (\ref{eqn:hc}) is $W$-invariant in $h_1$, the result also holds for $h_1$ such that $ih_1^* = w(\lambda + \rho)$, with $w \in W$ and $\lambda$ a dominant weight.  Such $h_1$ form a lattice spanning $\mathfrak{h}_0$, so by scaling $h_2$ and shifting the scaling onto $h_1$ we obtain the result for all $h_2 \in \mathfrak{h}_0 \cap \mathfrak{h}'$ and $h_1$ in a dense subset of $\mathfrak{h}_0$.\footnotemark \footnotetext{This type of approximation argument was introduced by Ziegler in his proof of the Kostant convexity theorem in \cite{FZ}.} Analytic continuation then gives the result for all $h_1, h_2 \in \mathfrak{h}$, completing the proof of Theorem \ref{thm:hc}.

\section*{Acknowledgements}
The author acknowledges partial support from NSF grants DMS 1714187 and DMS 1148284. The author also thanks Govind Menon, Jean-Bernard Zuber, Pierre Le Doussal, Peter Forrester, Sigurdur Helgason and Boris Hanin for helpful comments and conversations, as well as the Park City Mathematics Institute, supported by NSF grant DMS 1441467, for the opportunity to participate in the 2017 summer school on random matrices.

\bibliographystyle{alpha}

\end{document}